\theoremstyle{amsart}
\newfont{\fnt}{cmsy10}
\newfont{\sss}{cmr10}
\newfont{\azb}{wncyr10}
\newfont{\azbit}{wncyi10}
\theoremstyle{definition}
\theoremstyle{plain}
\newtheorem{vt}{Theorem}
\newtheorem{lm}{Lemma}
\newtheorem{prp}{Proposition}
\theoremstyle{definition}
\newtheorem{pz}{Remark}
\newtheorem{pr}{Example}
\begin{document}
\title[Low-order Hamiltonian operators having momentum]{
{\protect\vspace*{-1cm}}Low-order Hamiltonian operators having momentum}
\author{Ji\v{r}ina Vodov\'a}
\keywords{Hamiltonian operators, evolution equations, averaging}
\subjclass[2010]{37K05, 37K10}
\address{Mathematical Institute, Silesian University in Opava, Na Rybn\'{i}\v{c}ku \nolinebreak 1, 746 01 Opava, Czech Republic}
\email{Jirina.Vodova@math.slu.cz}
\maketitle
\begin{abstract} {\protect\vspace*{-0.7cm}}
We describe all fifth-order Hamiltonian operators in one dependent and one independent variable that possess momentum, i.e., for which there exists a Hamiltonian associated with translation in the independent variable. Similar results for first- and third-order Hamiltonian operators were obtained earlier by Mokhov.
\looseness=-1
\end{abstract}
\section{Introduction}
Hamiltonian evolution equations are well known to play an important role in modern
mathematical physics \cite{kac,dorfman,krasilshchik,mokhov2}. Indeed, a Hamiltonian operator maps the variational derivatives of
the conserved quantities into symmetries; this is of particular significance
in the theory of integrable systems which often turn out to be \textit{bi}-Hamiltonian,
see e.g.\ \cite{dickey, dorfman, fokas, fokas2,olver,olver3, sergyeyev, sergyeyev3} and references therein.

In the present paper we employ the so-called special contact transformations, first introduced in \cite{mokhov1985} and considered later in more detail  in \cite{mokhov}, to classify fifth-order Hamiltonian operators admitting momentum, see below for details.
Special contact transformations preserve existence of momentum, see \cite{mokhov1985,mokhov}, and for this reason
we study the existence of momentum for just the representatives of the associated equivalence classes.  \looseness=-1

Existence of momentum is useful for averaging the corresponding Hamiltonian systems, see e.g. \cite{dubrovin2}.
Hamiltonian operators having momentum could be employed e.g.\ for the generation of hierarchies
of local symmetries (i.e., higher commuting flows) in the following fashion.

Suppose we are given a nonzero Hamiltonian operator, say $\mathfrak{D}$, 
in one dependent variable $u$ and one independent variable $x$ possessing momentum, i.e., there exists a functional
$\mathcal{P}=\int h dx$ such that $u_x=\mathfrak{D}\delta_u \mathcal{P}$.
Further assume that there exists another translation-invariant Hamiltonian operator $\mathfrak{E}$ which is compatible with $\mathfrak{D}$ and such that the operator $\mathfrak{R}=\mathfrak{E}\circ\mathfrak{D}^{-1}$ is a weakly nonlocal hereditary operator. Then $\mathfrak{R}$ is a recursion operator for the equation $u_{t_0}=u_x$, and under a further minor technical assumption of normality of $\mathfrak{R}$ in the sense of \cite{sergyeyev2}, by Theorem~1 from  \cite{sergyeyev2} the quantities $\mathfrak{R}^i(u_x)$ are local and the associated flows commute for all $i=1,2,3,\dots$, i.e., we have an infinite hierarchy of local commuting flows $u_{t_j}=\mathfrak{R}^j(u_x)$, $j=0,1,2,\dots$.

\section{Preliminaries}

In what follows we are going to deal with Hamiltonian operators and associated Hamiltonian evolution
equations involving a single spatial variable $x$ and a single dependent variable $u$.
A Hamiltonian evolution equation takes the form
$$u_t=\mathfrak{D}\delta_u\mathcal{T}[u],$$
where $\mathfrak{D}$ is a Hamiltonian operator,
$\mathcal{T}=\int T[u]dx$ is a functional (often referred to as the Hamiltonian),
$\delta_u$ denotes the
variational derivative with respect to $u$, and the notation $T[u]$ 
indicates that $T$ is a differential function, see the definition below.
Recall (see e.g.\ \cite{olver} for details) that a Hamiltonian operator $\mathfrak{D}$ defines the Poisson bracket
$$\left\{\mathcal{R},\mathcal{S}\right\}=\int\delta_u\mathcal{R}\mathfrak{D}\delta_u\mathcal{S}
\mathrm{d}x$$ which should satisfy certain conditions, namely,
\textit{skew symmetry}$$\left\{\mathcal{R},\mathcal{S}\right\}
=-\left\{\mathcal{S},\mathcal{R}\right\}$$
and the \textit{Jacobi identity}
$$\left\{\left\{\mathcal{R},\mathcal{S}\right\},\mathcal{T}\right\}+
\left\{\left\{\mathcal{S},\mathcal{T}\right\},\mathcal{R}\right\}+
\left\{\left\{\mathcal{T},\mathcal{R}\right\},\mathcal{S}\right\}=0$$
which must hold for all admissible functionals $\mathcal{R},\mathcal{S},$ and $\mathcal{T}$.

It can be shown that the
skew-symmetry condition is equivalent to the skew-adjointness of the operator $\mathfrak{D}$.
Let $D_x$ denote the total derivative with respect to the spatial variable $x$ and $u_i\equiv D_x^i(u)$.
Recall (see e.g.\ \cite{olver}) that a \textit{differential function} by definition depends on 
$x$, $u$, and finitely many derivatives $u_j$ of $u$
with respect to the space variable $x$.

In \cite{dorfman}, for any operator $\mathfrak{D}=\sum_{k=0}^N p_k D_x^k$
and for any differential function $f$ the author defines another
differential operator $D_\mathfrak{D}f$ by the formula
$$(D_\mathfrak{D}f)h=\left(\mathrm{pr\ v}_h(\mathfrak{D})\right)(f),$$
where $\mathrm{pr\ v}_h$ is the prolongation of a vector field $\mathrm{v}_h$ with the
characteristic $h$, i.e.,
$$
\mathrm{pr\ v}_h=h\frac{\partial }{\partial u}+\sum_{i=1}D^i(h)\frac{\partial}{\partial u_i}  .
$$

We have
$$D_\mathfrak{D}f=\sum_{k,m}\frac{\partial p_k}{\partial u_{m}}D^k (f) D^m,$$ and it can be shown \cite{dorfman} that
the Jacobi identity for the skew-adjoint operator $\mathfrak{D}$ is equivalent to the condition
\begin{equation}\label{dorfman}(D_\mathfrak{D}h_1)\mathfrak{D}h_2-(D_\mathfrak{D}h_2)\mathfrak{D}h_1
+\mathfrak{D}(D_\mathfrak{D}h_1)^*h_2=0\end{equation}
which must hold for arbitrary smooth differential functions $h_1$ and $h_2$.
\looseness=-1

Recall (see e.g.\ \cite{kac}) that the \textit{differential order} of a differential function $f$,
denoted by $\mathrm{ord}(f)$, is the maximal $m\in\mathbb{Z}_+$
such that $\frac{\partial f}{\partial u_m} \neq 0$ if $f$ is not a quasiconstant ($f$ is \textit{quasiconstant} if it depends only on the spatial variable $x$), and is $-\infty$
if $f$ is quasiconstant. Following \cite{dorfman} define the \textit{level} $m$ of the Hamiltonian operator
$\mathfrak{D}=\sum_{k=0}^N p_k D^k$  of order $N$ to be
$m=\max\limits_{j}\left\{j+\mathrm{ord}(p_j)\right\}$.
The possible values of the level of a nonquasiconstant-coefficient Hamiltonian operator were studied e.g.\
in \cite{dorfman, kac}. In this paper we are specifically interested in fifth-order nonquasiconstant-coefficient Hamiltonian operators, whose only
possible level values $m$ are $m=5,6$ or $7$ \cite{dorfman}.

Following \cite{mokhov} we say that a Hamiltonian operator $\mathfrak{D}$ \textit{has momentum}
 if there exists a functional $\mathcal{T}$, referred to as a {\em momentum}, such that
$$\mathfrak{D}\delta_u\mathcal{T}=u_1.$$

Differential substitutions are among the most important tools using which we can distinguish 
Hamiltonian operators that have momentum from those that have not. The following lemma shows
how the Hamiltonian operators behave under differential substitutions:
\begin{lm}[\cite{mokhov}]\label{lm1} Let $\mathfrak{D}_1$ be a Hamiltonian operator in the variables $x, u$.
Under the transformation
\begin{equation}\label{1}x=\varphi(y,v,v_1,\dots,v_{m}),\quad u=\psi(y,v,v_1,\dots,v_{n}),\end{equation}
where $v_{j}=D_y^j(v)$, and $D_y$ is the total derivative with respect to $y$,
the operator $\mathfrak{D}_1$ goes into the Hamiltonian operator $\mathfrak{D}_2$ defined by the formula
\begin{equation}\overline{\mathfrak{D}}_1=(D_y(\varphi))^{-1}K^*\circ \mathfrak{D}_2\circ K,\end{equation}
where $$K=\sum_{i=0}^{\max(m,n)}(-1)^iD_y^i\circ\left(\frac{\partial\psi}{\partial v_{i}}D_y(\varphi)
-\frac{\partial\varphi}
{\partial v_{i}}D_y(\psi)\right),$$
 $K^*$ is the formal adjoint of $K$, and
$\overline{\mathfrak{D}}_1$ is obtained from $\mathfrak{D}_1$ upon using (\ref{1}) and
setting $D_x=(D_y(\varphi))^{-1}D_y$.
\end{lm}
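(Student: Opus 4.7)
My plan is to establish the formula by combining two ingredients: coordinate-invariance of the Poisson bracket induced by $\mathfrak{D}_1$ under the change of variables (\ref{1}), together with a chain-rule identity that relates variational derivatives in the two coordinate systems via the operator $K$. The bracket $\{\mathcal{R},\mathcal{S}\}=\int(\delta_u\mathcal{R})(\mathfrak{D}_1\delta_u\mathcal{S})\,dx$ is intrinsically defined on functionals, so its value is independent of the coordinates used to compute it. After applying (\ref{1}), using $dx=D_y(\varphi)\,dy$ and $D_x=(D_y(\varphi))^{-1}D_y$, the bracket in the new variables takes the form $\int(\delta_u\mathcal{R})(\overline{\mathfrak{D}}_1\delta_u\mathcal{S})D_y(\varphi)\,dy$. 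By definition of the transformed Hamiltonian operator, the same bracket must also equal $\int(\delta_v\mathcal{R})(\mathfrak{D}_2\delta_v\mathcal{S})\,dy$, and equating these two expressions is what will force the stated formula.

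The main technical step is to prove the chain-rule identity $\delta_v\mathcal{R}=K(\delta_u\mathcal{R})$ for every admissible $\mathcal{R}$. Starting from $\mathcal{R}=\int R[u]\,dx=\int R\cdot D_y(\varphi)\,dy$, I would apply $\delta_v=\sum_i(-1)^iD_y^i\circ(\partial/\partial v_i)$ and expand each $\partial R/\partial v_i$ by the chain rule through the $u_j$'s, which are themselves re-expressed using $u_j=D_x^j(\psi)=[(D_y(\varphi))^{-1}D_y]^j(\psi)$. A parallel differentiation of the Jacobian factor $D_y(\varphi)$ with respect to $v_i$ produces terms proportional to $\partial\varphi/\partial v_i$. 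After careful bookkeeping and integration-by-parts applications of the Leibniz rule, these contributions recombine precisely into the coefficient $\frac{\partial\psi}{\partial v_i}D_y(\varphi)-\frac{\partial\varphi}{\partial v_i}D_y(\psi)$ multiplying, after the outer $(-1)^iD_y^i$, the standard variational derivative $\delta_u\mathcal{R}$; the result is $K(\delta_u\mathcal{R})$ by the very definition of $K$.

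With the chain-rule identity in hand, substituting $\delta_v\mathcal{R}=K(\delta_u\mathcal{R})$ and $\delta_v\mathcal{S}=K(\delta_u\mathcal{S})$ into the bracket and moving $K$ through to the opposite factor via its formal adjoint yields $\int(\delta_u\mathcal{R})(K^*\mathfrak{D}_2K\,\delta_u\mathcal{S})\,dy$; equating this with the previous expression and exploiting the fact that the identity must hold for arbitrary $\mathcal{R},\mathcal{S}$ forces $K^*\circ\mathfrak{D}_2\circ K=D_y(\varphi)\cdot\overline{\mathfrak{D}}_1$, which rearranges to the claimed formula. The principal obstacle is, by a wide margin, the chain-rule identity itself: one must simultaneously track how each $v_i$ enters $R$ in three distinct ways --- through $\psi$, through the iterated action of $D_x=(D_y(\varphi))^{-1}D_y$ whose coefficient itself depends on $v_i$, and through the volume factor $D_y(\varphi)$ --- and the non-trivial fact is that all of these contributions collapse into the single contact-invariant combination that defines $K$. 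This is the reason the same structural formula applies for general contact (and not only point) transformations.
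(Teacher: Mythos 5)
The paper itself offers no proof of this lemma --- it is imported verbatim from Mokhov's paper \cite{mokhov} --- so there is nothing internal to compare your argument against; judged on its own, your outline is the standard and correct derivation. The two pillars you identify (coordinate-invariance of the Poisson bracket, and the chain rule $\delta_v\mathcal{R}=K(\delta_u\mathcal{R})$) are exactly the right ones, and the final rearrangement to $K^*\circ\mathfrak{D}_2\circ K=D_y(\varphi)\,\overline{\mathfrak{D}}_1$ is correct. One suggestion: the chain-rule identity, which you rightly flag as the only hard step, is obtained much more cleanly from the first variation than from the brute-force expansion of $\sum_i(-1)^iD_y^i\partial/\partial v_i$ applied to $R\cdot D_y(\varphi)$ that you sketch. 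Namely, a variation $v\mapsto v+\epsilon Q$ changes $u$ \emph{at fixed} $x$ by $\epsilon(D_y(\varphi))^{-1}LQ$, where $L=\sum_i\bigl(\frac{\partial\psi}{\partial v_i}D_y(\varphi)-\frac{\partial\varphi}{\partial v_i}D_y(\psi)\bigr)D_y^i$ (the second term is the correction $-u_x\,\delta x$ accounting for the shift of the independent variable); then $\delta\mathcal{R}=\int\delta_u\mathcal{R}\cdot(D_y(\varphi))^{-1}LQ\cdot D_y(\varphi)\,dy=\int L^*(\delta_u\mathcal{R})\,Q\,dy$, so $\delta_v\mathcal{R}=L^*(\delta_u\mathcal{R})=K(\delta_u\mathcal{R})$ since $K$ as written is precisely $L^*$. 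This both avoids the "careful bookkeeping" and explains structurally why $K$ appears in formally adjoint form. A minor point worth acknowledging: "arbitrary $\mathcal{R},\mathcal{S}$" only lets you test the operator identity against functions in the image of the variational derivative, not against all differential functions; this is enough to conclude equality of the two skew-adjoint operators, but it deserves a word (it is the same standard nondegeneracy argument used whenever a Poisson bivector is read off from its bracket).
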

\begin{pz}\label{pz1}
 A differential substitution preserves the order of a scalar local Hamiltonian operator if and only if this substitution is a contact transformation, see Theorem 1 in \cite{mokhov}.
Note that in general the operator $\mathfrak{D}_2$ may contain nonlocal terms unless (\ref{1}) is a contact
transformation, cf.\ e.g.\ \cite{astashov, kac, mokhov}.
\end{pz}

General contact transformations  do not preserve the property of having momentum. However, in \cite{mokhov1985}
Mokhov introduced a pseudogroup of
\textit{special contact transformations}
$$x=\varphi(y,v,v_y)=y+w(v,v_y),u=\psi(v,v_y),$$
$$\frac{\partial\varphi}{\partial v_y}D_y(\psi)=\frac{\partial\psi}{\partial v_y}D_y(\varphi),
\rho=\frac{\partial\psi}{\partial v}-\frac{\partial\varphi}{\partial v}D_y(\psi)/D_y(\varphi)\not\equiv 0$$
which preserve existence of momentum, see \cite{mokhov1985, mokhov}.

\begin{pz}
Note that  special contact transformations play an important role also in other applications, e.g. in  the theory of canonical variables for the two-dimensional hydrodynamics of an incompressible fluid with vorticity \cite{mokhov1989, mokhov1989_2}.
\end{pz}
It is readily checked that the coefficients of Hamiltonian operators having momentum may not explicitly depend on the spatial variable $x$, that is, the Hamiltonian operators having momentum must be translation-invariant. This happens because in order to possess momentum
the Lie derivative of the Hamiltonian operator in question along the vector field with the characteristic $u_1$ must vanish.

\section{First- and third-order Hamiltonian operators having momentum}

Mokhov \cite{mokhov} has shown that for the first-order Hamiltonian operators the condition of translation invariance is not only necessary but also sufficient for the existence of momentum:
\begin{prp}[\cite{mokhov}]
A first-order Hamiltonian operator has momentum if and only if it is translation-invariant.
\end{prp}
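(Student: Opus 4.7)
Necessity is immediate from the remark at the end of Section~2: if $\mathfrak{D}\delta_u\mathcal{T}=u_1$, then the Lie derivative of $\mathfrak{D}$ along the vector field with characteristic $u_1$ must vanish, which forces the coefficients of $\mathfrak{D}$ to be free of explicit $x$-dependence. So only the sufficiency direction is nontrivial.

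For sufficiency, I would first use skew-adjointness to put $\mathfrak{D}$ into canonical form. A scalar first-order operator $\mathfrak{D}=p_1D_x+p_0$ is skew-adjoint precisely when $p_0=\tfrac12 D_x(p_1)$, so writing $p:=p_1$ gives $\mathfrak{D}=pD_x+\tfrac12 D_x(p)$ with $p\not\equiv0$ (since $\mathfrak{D}$ is of order one). Translation invariance then says that $p=p(u,u_1,u_2,\dots)$ carries no explicit $x$. I would next invoke the classification of scalar first-order Hamiltonian operators (see e.g.\ \cite{dorfman,mokhov}): the Jacobi identity in the form \eqref{dorfman} pins $p$ down to depend on $u$ alone, so that $\mathfrak{D}=p(u)D_x+\tfrac12 p'(u)u_1$ is necessarily of hydrodynamic type.

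I expect this classification step to be the main technical obstacle. It is established by substituting test characteristics $h_1,h_2$ of increasing differential order into \eqref{dorfman} and inspecting the top-order coefficients: if $p$ depended genuinely on some $u_k$ with $k\ge1$, the leading contributions in \eqref{dorfman} could not be made to cancel, giving a contradiction that forces $\partial p/\partial u_k=0$ for every $k\ge1$. The argument is standard but bookkeeping-intensive.

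Once $p=p(u)$ is in hand, the momentum is obtained by quadrature. Looking for $\mathcal{T}=\int T(u)\,dx$ and setting $y(u):=T'(u)=\delta_u\mathcal{T}$, the requirement $\mathfrak{D}\delta_u\mathcal{T}=u_1$ becomes, after factoring out $u_1$, the first-order linear ODE
$$p(u)\,y'(u)+\tfrac12\,p'(u)\,y(u)=1.$$
Multiplying both sides by $p(u)^{-1/2}$ recasts the left-hand side as $\bigl(p^{1/2}\,y\bigr)'$, so two successive quadratures yield first $y$ and then $T(u)$, exhibiting an explicit momentum and completing the argument.
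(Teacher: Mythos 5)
Your proposal is correct. The paper itself gives no proof of this proposition (it is quoted from Mokhov), so there is nothing to match line by line; but your necessity argument is exactly the paper's general translation-invariance argument (the Lie derivative of $\mathfrak{D}$ along the flow with characteristic $u_1$ must vanish, forcing $\partial p_k/\partial x=0$), and your sufficiency argument is the standard one underlying Mokhov's result: skew-adjointness gives $\mathfrak{D}=pD_x+\tfrac12 D_x(p)$, the classification gives $p=p(u)$, and the momentum comes from a quadrature. Two remarks. First, the only step you do not carry out is the claim that the Jacobi identity forces $\mathrm{ord}(p)\le 0$ for a scalar first-order operator; this is a known classification fact (the $N=1$ instance of the leading-coefficient results of Cooke and Astashov--Vinogradov that the paper invokes for $N=5$), so citing it is legitimate, but you are right that it is where essentially all the work of the sufficiency direction is hidden, and your sketch of how to extract it from the Jacobi identity by comparing top-order terms is the correct strategy. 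Second, the integrating factor $p^{-1/2}$ should be read as $|p|^{-1/2}$ with the obvious sign bookkeeping (or one simply solves the linear ODE $p\,y'+\tfrac12 p'y=1$ directly on a domain where $p\neq 0$); at zeros of $p$ the operator degenerates, but nondegeneracy is implicitly assumed throughout the paper. With these caveats the argument is complete and consistent with how the paper treats the analogous higher-order cases.
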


Classification of the third-order translation-invariant operators
under a special contact transformation was also obtained by Mokhov. He employed it to find out whether a given third-order translation-invariant Hamiltonian operator has momentum:

\begin{prp}[\cite{mokhov}] An arbitrary translation-invariant Hamiltonian operator of
the third order can be reduced by a special contact transformations to one of the operators (\ref{OP1})-(\ref{OP3}).
\begin{enumerate}
\item{An operator
\begin{equation}\label{OP1}
\mathfrak{D}=\pm \frac{1}{u_x}\left[D_x^3+2SD_x+D_xS\right]\circ\frac{1}{u_x}+
2fD_x+D_xf,
\end{equation}
where  $S=\frac{u_3}{u_1}-\frac{3}{2}\frac{(u_2)^2}{(u_1)^2}$  and $f$ is an arbitrary function of $u$ only, has momentum. The corresponding functional is of the form  $\int p(u)\mathrm{d}x$,
 where $p(u)$ is the solution of the equation
$$\pm\frac{\partial^4 p}{\partial u^4}+2f(u)\frac{\partial^2 p}{\partial u^2}
+\frac{\partial p}{\partial u}\frac{\partial f}{\partial u}-1=0.$$}
\item{An operator \begin{equation}\label{OP2}
\mathfrak{D}=\pm \left[D_x^3+2AuD_x+Au_x\right], A=\mathrm{const}>0
\end{equation}
 has momentum.}
\item{An operator \begin{equation}\label{OP3}
\mathfrak{D}=\pm \left[D_x^3+AD_x\right], A=\mathrm{const}.
\end{equation}
 does not have momentum.}

\end{enumerate}
\end{prp}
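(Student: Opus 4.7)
The plan is to separate the statement into (A) the classification of translation-invariant third-order Hamiltonian operators up to special contact transformations, and (B) the momentum question for each of the three normal forms.

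For Part A, I start from the most general skew-adjoint translation-invariant third-order ansatz $\mathfrak{D} = p_3 D_x^3 + p_2 D_x^2 + p_1 D_x + p_0$ with $p_i$ translation-invariant differential functions, use skew-adjointness to express $p_2$ (and part of $p_0$) in terms of $p_3$ and $p_1$, and then impose Dorfman's Jacobi condition \eqref{dorfman}. I then split by the level $m = \max_j\{j+\mathrm{ord}(p_j)\}$ of $\mathfrak{D}$, which by \cite{dorfman} lies in $\{3,4,5\}$, using Lemma \ref{lm1} applied to the special contact pseudogroup $x = y + w(v,v_y)$, $u = \psi(v, v_y)$ to normalize $p_3$. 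Level $5$ forces $p_3 = \pm 1/u_x^2$ and brings in the Schwarzian $S = u_3/u_1 - \frac{3}{2}(u_2/u_1)^2$ through the Jacobi constraint, leaving one free function $f = f(u)$; this yields (OP1). Level $3$ normalizes to a constant leading coefficient with $p_1$ linear in $u$ and gives (OP2) after rescaling $u$ to arrange $A>0$. The intermediate level further reduces to the constant-coefficient form (OP3).

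For Part B(2), a direct computation gives $\mathfrak{D}(1) = \pm A u_x$, so $\mathcal{T} = \pm A^{-1}\int u\,dx$ is a momentum. For Part B(1), I seek $\mathcal{T} = \int p(u)\,dx$; then $\delta_u \mathcal{T} = p'(u)$, and the key identity
\[
\frac{1}{u_1}\bigl[D_x^3 + 2 S D_x + D_x(S)\bigr]\frac{1}{u_1}\bigl(p'(u)\bigr) = u_1\, p^{(4)}(u),
\]
which reflects the fact that the bracketed operator is, up to conjugation by $1/u_1$, the third-order Bol operator associated with the Schwarzian of $u$ with respect to $x$, together with the chain-rule computation $(2 f D_x + D_x f)\,p'(u) = u_1\bigl[2 f p''(u) + f'(u) p'(u)\bigr]$, reduces $\mathfrak{D} p'(u) = u_x$ to the stated linear ODE for $p(u)$, which is solvable on any interval where $f$ is smooth. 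For Part B(3), I argue by contradiction: if $E = \delta_u \mathcal{T}$ is a translation-invariant differential function satisfying $\pm(D_x^3 + A D_x) E = u_x$, then $D_x^2 E + A E = \pm u + C$ for some constant $C$. If $n = \mathrm{ord}(E) \ge 1$, the term $\partial E/\partial u_n \cdot u_{n+2}$ on the left has no possible match on the right; if $n = 0$, writing $E = E(u)$ and comparing the $u_2$-coefficient in $D_x^2 E = E''(u) u_1^2 + E'(u) u_2$ forces $E' \equiv 0$, making $E$ constant and the residual equation $A E = \pm u + C$ absurd.

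The main obstacle is Part A: one must keep precise track of how each coefficient of $\mathfrak{D}$ transforms under the special contact pseudogroup via Lemma \ref{lm1} and execute a careful case split by the level. The arguments in Part B are short once the normal forms are available; the subtlest of them is B(1), where the otherwise painful chain-rule bookkeeping is tamed by recognising $S$ as the Schwarzian of $u$ in $x$ and invoking the covariance of the corresponding Bol operator.
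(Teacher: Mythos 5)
First, a remark on the comparison you ask for: the paper gives no proof of this proposition at all --- it is quoted from Mokhov \cite{mokhov} as a known result --- so your attempt can only be judged on its own internal soundness. Your Part B is correct and complete. For (\ref{OP1}) the identity $\frac{1}{u_1}\bigl[D_x^3+2SD_x+D_x(S)\bigr]\frac{1}{u_1}\bigl(p'(u)\bigr)=u_1\,p^{(4)}(u)$ is exactly the covariance of the Bol operator attached to the Schwarzian $S=\{u;x\}$ (equivalently, the statement that the bracketed operator, conjugated by $1/u_1$, is the push-forward of $D_u^3$), and together with $(2fD_x+D_x f)\,p'(u)=u_1\bigl[2fp''(u)+f'(u)p'(u)\bigr]$ it reduces $\mathfrak{D}\,p'(u)=u_1$ to precisely the fourth-order linear ODE in the statement, which is solvable wherever $f$ is smooth. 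Your treatments of (\ref{OP2}) (evaluate on the constant $\pm A^{-1}$) and of (\ref{OP3}) (integrate once, then compare coefficients of $u_{n+2}$ and of $u_2$) are also sound.

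The genuine gap is Part A, which you flag as the main obstacle and then only outline --- and the outline as stated would not work. Your case split is organized by the level $m=\max_j\{j+\mathrm{ord}(p_j)\}$, claimed to lie in $\{3,4,5\}$, with $m=5$ yielding (\ref{OP1}) and $m=3$ yielding (\ref{OP2}). But with the paper's definition of level, the leading coefficient of (\ref{OP1}) is $\pm u_x^{-2}$ of differential order $1$ and its zeroth-order coefficient has differential order $4$, so (\ref{OP1}) has level $4$, not $5$; the coefficients of (\ref{OP2}) are $\pm 1$, $\pm 2Au$ and $\pm Au_x$, giving level $1$, which is not even in your claimed range; and (\ref{OP3}) is quasiconstant. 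So the three normal forms are not separated by the level in the way your trichotomy assumes. The classification is in fact organized by the admissible forms of the leading coefficient of a third-order Hamiltonian operator (the third-order analogue of the result of \cite{cooke} used in Proposition~3 of this paper: $p_3=\pm(\alpha(u)u_x+\beta(u))^{-2}$ or quasiconstant), followed by normalization under the special contact pseudogroup: $\alpha\not\equiv 0$ leads to $\pm u_x^{-2}$ and, after solving the Jacobi constraints for the lower coefficients, to (\ref{OP1}); $\alpha\equiv 0$ leads to leading coefficient $\pm 1$, where the Jacobi identity forces the remaining coefficient to be affine in $u$, giving (\ref{OP2}) or (\ref{OP3}). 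None of these computations --- the determination of the admissible leading coefficients, the resolution of the Jacobi constraints, or the verification that the normalizing substitutions are genuinely \emph{special} contact transformations (so that the momentum property is preserved) --- is carried out in your proposal, and since the classification is the substantive content of the proposition, it remains unproved as written.
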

\section{Fifth-order Hamiltonian operators having momentum}
As proved above, one should look for Hamiltonian operators having momentum
among the translation-invariant ones. Below we will classify fifth-order translation-invariant
Hamiltonian operators according to their leading coefficients up to special contact transformations 
which preserve the property of having momentum. Our first result in this direction is as follows:
\begin{prp}
Any fifth-order translation-invariant Hamiltonian operator can be reduced by a special contact transformation
to an operator with leading coefficient equal to either $\pm 1$ or  $\pm \frac{1}{u_1^4}$.
\end{prp}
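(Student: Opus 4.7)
The plan is to use the constraints on a fifth-order translation-invariant Hamiltonian operator $\mathfrak{D} = \sum_{k=0}^{5} p_k D_x^k$ imposed by skew-adjointness and the Jacobi identity to determine the admissible forms of $p_5$, and then to normalize $p_5$ by a suitable special contact transformation. First, skew-adjointness $\mathfrak{D}^{*} = -\mathfrak{D}$ is purely algebraic: equating coefficients of $D_x^k$ on both sides yields explicit formulas expressing $p_4, p_2, p_0$ as differential polynomials in $p_5, p_3, p_1$ (for instance, $p_4 = \tfrac{5}{2} D_x(p_5)$). Hence the classification reduces to the analysis of the odd-indexed coefficients, and for the present proposition only $p_5$ matters.

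Next, I would substitute $\mathfrak{D}$ into the Dorfman form \eqref{dorfman} of the Jacobi identity and compare the coefficients of the monomials $D_x^{\alpha}(h_1)D_x^{\beta}(h_2)$ at the top total differential order. The resulting system is heavily overdetermined; its top components force $\partial p_5/\partial u_k = 0$ for all $k \geq 2$, so that $p_5 = p_5(u,u_1)$, and the next layer of constraints yields an ODE in $u_1$ whose only solutions are $p_5 = c(u)$ and $p_5 = c(u)/u_1^4$ for an arbitrary function $c$ of $u$. This is strictly parallel to Mokhov's third-order analysis, where the two families were $p_3 = c(u)$ and $p_3 = c(u)/u_1^2$; the exponent is doubled here because two extra spatial derivatives contribute to the leading symbol.

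Finally, I would normalize $p_5$ using an appropriate special contact transformation. Taking $\varphi = y$ and $\psi = \psi(v)$, i.e., a point transformation in $u$ (a degenerate instance of a special contact transformation), Lemma \ref{lm1} specifies how $p_5$ transforms: it picks up a power of $\psi'(v)$ whose exponent can be read off from the fact that $K$ reduces in this case to multiplication by $\rho D_y(\varphi)$. A suitable choice of $\psi$, combined if necessary with a rescaling of $y$ to adjust the overall scale and sign, normalizes $c(u)$ to $\pm 1$, producing the two stated normal forms $p_5 = \pm 1$ and $p_5 = \pm 1/v_1^4$. The main obstacle is the middle step: the Jacobi identity for an order-$5$ operator is a bidifferential equation of rather high total order in the test functions $h_1, h_2$, and extracting the relevant top-order symbols in a manageable form (in particular, showing cleanly that $p_5$ cannot depend on $u_2, u_3, \ldots$) is the technically heaviest part of the argument. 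Once the form of $p_5$ is pinned down, the remaining normalization is a direct application of Lemma \ref{lm1}.
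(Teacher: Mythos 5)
Your step (2) — the claimed output of the Jacobi identity — is where the argument breaks. You assert that the top-order constraints force $\partial p_5/\partial u_k=0$ for all $k\geq 2$ and that the surviving ODE in $u_1$ leaves only the two families $p_5=c(u)$ and $p_5=c(u)/u_1^4$. Neither claim is correct. A fifth-order Hamiltonian operator can have level up to $7$, i.e.\ $\mathrm{ord}(p_5)\leq 2$, and Cooke's classification (which the paper quotes and relies on) gives the general leading coefficient as $\pm 1/(\alpha v_2+\beta)^6$ with $\alpha,\beta$ functions of $v,v_1$; so $p_5$ genuinely may depend on $v_2$. Even in the sub-case $\alpha\equiv 0$ the leading coefficient is $\pm 1/(\tilde{\alpha}(v)v_1+\tilde{\beta}(v))^4$, which for $\tilde{\alpha}\tilde{\beta}\not\equiv 0$ is not of either of your two forms. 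Your proposed dichotomy therefore omits exactly the cases that make the proposition nontrivial.

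This omission propagates into your normalization step: you only invoke point transformations $x=y$, $u=\psi(v)$ (plus a ``rescaling of $y$,'' which is in fact not available, since special contact transformations must have the form $x=y+w(v,v_y)$). Point transformations cannot remove a $v_2$-dependence from the leading coefficient, nor reduce the mixed case $\pm 1/(\tilde{\alpha}v_1+\tilde{\beta})^4$ with both functions nonzero. The substance of the paper's proof is precisely the construction of genuine special contact transformations $x=y+\tilde{w}(v,v_1)$, $u=\tilde{\psi}(v,v_1)$, with $\tilde{w}$ chosen to satisfy the first-order PDE $\frac{\partial\tilde{w}}{\partial v}v_1+1=\frac{\beta}{\alpha}\frac{\partial\tilde{w}}{\partial v_1}$ (to kill the $v_2$-dependence when $\alpha\not\equiv 0$) and an analogous ODE $\frac{\partial\tilde{w}}{\partial v}=\tilde{\alpha}/\tilde{\beta}$ in the mixed $v_1$-case; only after these reductions do point transformations suffice to normalize to $\pm 1$ or $\pm 1/u_1^4$. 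To repair your argument you would either have to carry out the Jacobi-identity computation honestly (and recover Cooke's form $\pm 1/(\alpha v_2+\beta)^6$ rather than your narrower one) or cite that classification outright, and then supply the nontrivial contact transformations for the $v_2$-dependent and mixed cases.
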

\begin{proof}
The proof partially uses the line of reasoning analogous to the one used by Mokhov in his classification of third-order operators. The leading coefficient of a fifth-order translation-invariant Hamiltonian operator in a single spatial variable $y$ and a single dependent variable $v$
has the
general form  (see \cite{cooke})
$$\pm\frac{1}{(\alpha v_2+\beta)^6},\ \alpha=\alpha(v,v_1),\ \beta=\beta(v,v_1),$$
where $v_j=D_y^j (v)$.
If $\alpha\not\equiv 0$, then we can find a special contact
transformation to get rid of the dependence of the leading coefficient of the operator on $v_{2}$ in
the following way:
take a function $\tilde{w}(v,v_1)$ such that
$\frac{\partial \tilde{w}}{\partial v}v_1+1
=\frac{\beta}{\alpha}\frac{\partial\tilde{w}}{\partial v_1}$, $\frac{\partial\tilde{w}}{\partial v_1}
\not\equiv 0$, and  a function $\tilde{\psi}(v,v_1)\not\equiv 0$ such that
$(1+D_y(\tilde{w}))\frac{\partial\tilde{\psi}}{\partial v_1}=\frac{\partial\tilde{w}}
{\partial v_1}D_y(\tilde{\psi})$ and  $\tilde{\rho}
:=\frac{\partial\tilde{\psi}}{\partial v}-\frac{\partial\tilde{w}}{\partial v}
\frac{D_y(\tilde{\psi})}{(1+D_y(\tilde{w}))}\not\equiv 0.$ The functions $\tilde{w}$ and $\tilde{\psi}$
define  a special contact transformation
\begin{equation}\label{clas1}x=y+\tilde{w}(v,v_1),\ u=\tilde{\psi}(v,v_1),\end{equation}
where $x$ is a new independent variable and $u$ is a new dependent variable.
The inverse of (\ref{clas1}) is also a contact transformation:
\begin{equation}\label{clas2}y=x+w(u,u_1)=\varphi(x,u,u_1),\ v=\psi(u,u_1),\end{equation} and it can be  verified that the
leading coefficient of the operator  transformed by (\ref{clas2}) does not depend on $u_{2}$.

Now suppose that $\alpha\equiv 0$. Then \cite{cooke} the leading coefficient of our operator
is of the form $\pm\frac{1}{(\tilde{\alpha}v_1+\tilde{\beta})^4},\ \tilde{\alpha}=\tilde{\alpha}(v),\
\tilde{\beta}=\tilde{\beta}(v)$. If $\tilde{\beta}\equiv 0$,  it is impossible to
get rid of the dependence of the leading coefficient of the operator on $v_1$ using special
contact transformations alone. Using the transformation $$y=x+w(u),\ v=\psi(u),$$ 
where $\psi(u)$ is such that
$\frac{\partial \psi}{\partial u}=\sqrt[3]{1/\tilde{\alpha}^2}$, makes the leading coefficient equal to $\pm\frac{1}{u_1^4}$. If $\tilde{\beta}\not\equiv 0$, a special
contact transformation $$y=x+w(u), \ v=\psi(u),$$ which is an inverse of the transformation
$$x=y+\tilde{w}(v),u=\tilde{\psi}(v)\not\equiv\mathrm{const},
 \frac{\partial \tilde{w}}{\partial v}=\frac{\tilde{\alpha}}{\tilde{\beta}},$$
turns our operator into an operator with a leading coefficient that does not depend on $u_1$.

If the differential order of the leading coefficient is equal to zero
(i.e., the leading coefficient of our translation-independent operator depends only on $v$,
and is therefore of the form $\frac{1}{\alpha(v)}$),  the special contact transformation
$$y=x, v=\psi(u), \left(\frac{\partial\psi}{\partial u}\right)^2=\pm\frac{1}{\alpha(v(u))}$$ makes the leading coefficient of our transformed operator equal to $\pm 1$.
\end{proof}

In what follows a fifth-order Hamiltonian operator is supposed to be written in the form
$$\mathfrak{D}=aD_x^5+D_x^5\circ a+bD_x^3+D_x^3\circ b+cD_x+D_x\circ c,$$ which ensures  
skew-adjointness of the operator and hence skew-symmetry of the associated Poisson bracket.

Next we are going to show that no fifth-order Hamiltonian operator with the leading coefficient $\pm 1$ has momentum and find out what the momenta $\mathcal{P}$ for the operators with the leading coefficient $\pm 1/u_1^4$ look like. Note that the operators with the leading coefficient $-1$  can be transformed by the special contact transformation  $x=y$, $u=iv$, where $i=\sqrt{-1}$ is the imaginary unit, to operators with the leading coefficient $1$ so
it is sufficient to show that no fifth-order Hamiltonian operator with the leading coefficient $1$ has momentum. The same reasoning could be applied to operators with the leading coefficient $-1/u_1^4$, but as we are interested in  finding the explicit form of the momentum functionals $\mathcal{P}$ for operators with the leading coefficients $1/u_1^4$ and  $-1/u_1^4$, we discuss each of these cases separately.

Thus,  we first try to find general forms of fifth-order translation-invariant Hamiltonian operators with the leading coefficients $1$, $1/u_1^4$ and $-1/u_1^4$.
For the case of Hamiltonian operators with the leading coefficient equal to
$1$ this was done (even though in a more general setting than we actually need) in \cite{cooke}:\looseness=-1
\begin{lm}[\cite{cooke}]
A fifth-order Hamiltonian operator whose leading coefficient is $1$
 must be of the form
$$\mathfrak{D}=D_x^5+bD_x^3+D_x^3\circ b+cD_x+D_x\circ c,$$
where $b$ and $c$ are functions of $x$ alone or they are given by the formulas
\begin{eqnarray*}
b&=&\frac{3}{2}(u+\alpha)^{-1}(u_{xx}+\alpha^{\prime\prime})
-\frac{7}{4}(u+\alpha)^{-2}(u_x+\alpha^{\prime})^2+\beta(u+\alpha)+\gamma,\\
c&=&-\frac{z_4}{z}+\frac{\beta z_1^2}{2z}+\frac{wz_2}{2z}-\frac{wz_1^2}{4z^2}-\frac{w_1z_1}{z}
+\frac{9z_1z_3}{2z^2}-\frac{129z_1^2 z_2}{8z^3}+\frac{273z_1^4}{32z^4}\\&&+\frac{33 z_2^2}{8z^2}
-\frac{\beta z_2}{2}-\frac{3z\beta^{\prime\prime}}{2}-\frac{\beta^{\prime}z_1}{2}
-\frac{\beta^{2}z^2}{2}+\frac{w^2}{2},
\end{eqnarray*}
where $\alpha$, $\beta$, and $\gamma$ are functions of $x$ only, $w$ and $z$ are
given by
$$w=\beta z+\gamma,\ z=u+\alpha,$$
and $w_i=D_x^i(w)$, $z_i=D_x^i(z)$.

If  $\beta=0$, then any choice of $\alpha$ and $\gamma$
yields a Hamiltonian operator.

If $\beta\neq 0$,
then
$$\gamma=-\frac{\rho}{\beta^2}-\frac{\beta^{\prime\prime}}{2\beta}
+\frac{(\beta^{\prime})^2}{4\beta^2},$$
where $\rho$ is an arbitrary constant.
\end{lm}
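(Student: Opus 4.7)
The plan is to extract the conclusion from the Jacobi identity in the form \eqref{dorfman} applied to the skew-adjoint ansatz
\[
\mathfrak{D}=D_x^5+bD_x^3+D_x^3\circ b+cD_x+D_x\circ c.
\]
First I would compute $D_{\mathfrak{D}}f$ explicitly: since the only jet-dependent coefficients are $b$ and $c$, the operator $D_{\mathfrak{D}}f$ is a differential operator whose coefficients involve $\partial b/\partial u_m$ and $\partial c/\partial u_m$ acting on $D_x^3 f$ and $D_x f$ respectively. Substituting into \eqref{dorfman} with two arbitrary test functions $h_1,h_2$ yields a polynomial expression in $h_1,h_2$ and their total $x$-derivatives whose coefficients are differential functions of $u,b,c$; the Jacobi identity is equivalent to the vanishing of every such coefficient.

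Second, I would stratify by the total order of differentiation falling on $h_1,h_2$. The coefficients of the top-stratum monomials $D_x^i h_1\cdot D_x^j h_2$, starting from the largest value of $i+j$ produced by the expansion, force $\partial b/\partial u_m=0$ for $m\ge 3$, so that $b=b(x,u,u_1,u_2)$. Descending further in the stratification yields ODEs whose integration shows that the $u_2$-dependence of $b$ is at most linear and in fact matches the stated formula, while the $u_1$- and $u$-dependence is likewise pinned down to the displayed combination involving the functions $\alpha(x)$, $\beta(x)$, $\gamma(x)$ of the spatial variable alone (or, in the degenerate branch, $b$ collapses to a pure function of $x$).

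Third, having fixed $b$, the remaining lower-stratum coefficients of the Jacobi expression constitute a linear inhomogeneous differential system for $c$. Solving it produces exactly the displayed expression for $c$ in terms of $z=u+\alpha$ and $w=\beta z+\gamma$. The last surviving strata of the Jacobi identity translate into a compatibility condition coupling the functions $\alpha$, $\beta$, $\gamma$; when $\beta\equiv 0$ this condition holds identically, while for $\beta\not\equiv 0$ it integrates to the stated explicit formula for $\gamma$ in terms of $\alpha$, $\beta$, and an integration constant $\rho$.

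The main obstacle is the sheer algebraic bulk of the Jacobi expansion, which generates a large number of monomials in $h_1, h_2$ and their derivatives, each corresponding to a separate differential equation. In practice one organizes the bookkeeping by the total jet order of $b$ and $c$ and dispatches the highest strata first, which collapses the system dramatically before one reaches the final compatibility conditions for $\alpha$, $\beta$, $\gamma$; the substantive content of the proof lies precisely in verifying that the top strata cut down the ansatz to a three-parameter family and that the bottom strata are then consistent.
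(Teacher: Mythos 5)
The paper does not prove this lemma at all: it is imported verbatim from Cooke's classification \cite{cooke}, so there is no in-paper proof to compare against directly. The closest internal analogue is the proof of Lemma~\ref{leadcoeff2} (leading coefficient $\pm 1/u_1^4$), and your plan is exactly that strategy: write the operator in the symmetric skew-adjoint form, impose the Jacobi identity via \eqref{dorfman}, extract the coefficient equations of the monomials $D_x^i h_1\, D_x^j h_2$ ordered by total jet order, solve the resulting overdetermined PDE system for $b$ and then for $c$, and read off the residual compatibility condition on $\alpha,\beta,\gamma$. So the route is the right one and coincides with the one used both by Cooke and by the paper in the analogous case.

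The reservation is that, as written, your proposal is a programme rather than a proof: every step that carries the mathematical content is asserted instead of derived. That the top strata force $\partial b/\partial u_m=0$ for $m\ge 3$, that the surviving $u_2$-dependence of $b$ is linear with the specific coefficient $\tfrac{3}{2}(u+\alpha)^{-1}$ and quadratic remainder $-\tfrac{7}{4}(u+\alpha)^{-2}(u_1+\alpha')^2$, that the system for $c$ is consistent and integrates to the displayed expression in $z$ and $w$, and above all that the bottom stratum reduces, when $\beta\not\equiv 0$, to an ODE whose first integral is $\gamma=-\rho/\beta^2-\beta''/(2\beta)+(\beta')^2/(4\beta^2)$ --- these are precisely the claims one must verify, and none of them is obvious without the computation (compare the bulk of relations \eqref{c6}--\eqref{c0} and \eqref{mixed} needed in the easier translation-invariant case). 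Two smaller points you should make explicit: (i) why the general fifth-order skew-adjoint operator with leading coefficient $1$ may be assumed to have no independent $D_x^4$, $D_x^2$, $D_x^0$ parts, i.e.\ that the symmetric ansatz is a normal form and the $a$-terms drop out of $D_\mathfrak{D}$ because $a$ is constant; and (ii) where the degenerate branch ``$b$ and $c$ are functions of $x$ alone'' splits off from the generic one --- your sketch mentions it only in passing, but it is a genuine case distinction in the analysis.
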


On the other hand, using the results of \cite{cooke} we can now prove the following
\begin{lm}\label{leadcoeff2}
A fifth-order translation-invariant Hamiltonian operator whose leading coefficient is $\pm 1/u_1^4$ must be of the form
$$\mathfrak{D}=\pm\frac{1}{2u_1^4 }D_x^5\pm D_x^5\circ\frac{1}{2u_1^4 }+bD_x^3+D_x^3\circ b+
cD_x+D_x\circ c,$$
where
\begin{eqnarray*}
b&=&\frac{1}{2u_1^6}\left(\pm 10u_3u_1\mp 55u_2^2+2\alpha u_1^4\right),\\
c&=&\frac{1}{u_1^8}\left(3u_1^6u_2\frac{\partial \alpha}{\partial u}+2u_1^5u_3\alpha
-6u_1^4u_2^2\alpha+\beta u_1^8\mp 3u_1^3u_5\pm 65u_1^2u_2u_4\pm 50u_1^2u_3^2\right.\\
&&\left.\mp 615u_1u_2^2u_3\pm 735u_2^4\right),
\end{eqnarray*}
and   $\alpha$ and $\beta$ are  functions of $u$ only. \end{lm}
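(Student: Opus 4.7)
The plan is to impose the Jacobi identity on the most general skew-adjoint translation-invariant fifth-order operator with the prescribed leading coefficient. The skew-adjoint symmetric form
$$\mathfrak{D}=\pm\frac{1}{2u_1^4 }D_x^5\pm D_x^5\circ\frac{1}{2u_1^4 }+bD_x^3+D_x^3\circ b+cD_x+D_x\circ c$$
is the general ansatz, with $b,c$ a priori arbitrary differential functions of $u,u_1,u_2,\dots$ only (translation invariance kills explicit $x$-dependence). The constraint is encoded in Dorfman's identity \eqref{dorfman}, which must hold for arbitrary test functions $h_1,h_2$.

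The direct route is to extract constraints from \eqref{dorfman} order by order in the differential orders of $h_1$ and $h_2$. The highest-order piece involves only the leading coefficient $\pm 1/u_1^4$ and is automatically satisfied by skew-adjointness. The next level couples $D_x^5\circ\tfrac{1}{u_1^4}$ with the $b$-part: matching the top monomials in $D^i(h_1)D^j(h_2)$ forces the $u_2,u_3$ dependence of $b$ to be exactly $\pm\tfrac{1}{2u_1^6}(10u_3u_1-55u_2^2)$, leaving as free data a term of the form $\alpha\, u_1^{-2}$ with $\alpha$ depending on $u,u_1$; descending one further order pins $\alpha$ down to a function of $u$ alone. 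The subsequent orders couple $b$ with $c$: their top part fixes the $u_5,u_2u_4,u_3^2,u_2^2 u_3,u_2^4$ coefficients of $c$ as stated, as well as the $\alpha$-dependent coefficients of $u_2,u_3,u_2^2$, leaving only an additive $\beta\, u_1$-type term unconstrained; a final round constrains $\beta$ to be a function of $u$. One then verifies that the remaining, lower-order, components of Dorfman's identity are identically satisfied.

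A more economical route is to invoke the general classification of fifth-order Hamiltonian operators in \cite{cooke} already used in the previous lemma. Cooke provides a parametric normal form for any fifth-order Hamiltonian operator whose leading coefficient is of the form $\pm 1/(\tilde\alpha v_1+\tilde\beta)^4$; specializing to $\tilde\beta=0$ (the case in which, per the preceding proposition, the leading coefficient cannot be normalized to $\pm 1$) yields formulas for $b$ and $c$ in which the free data are certain functions of the independent and dependent variables. Imposing translation invariance collapses these to the two functions $\alpha(u)$ and $\beta(u)$ in the statement, producing exactly the displayed expressions.

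The main obstacle in the direct approach is the sheer algebraic bulk: cross terms between $D_x^5\circ\tfrac{1}{2u_1^4}$ and the $b,c$ parts generate many monomials of various differential orders, and the coefficient matching demands careful bookkeeping because of the rational dependence on $u_1$ in the leading coefficient. Cooke's pre-packaged classification sidesteps most of this computation, so the practical proof is essentially a specialization-plus-translation-invariance argument, with the direct Dorfman computation serving as a verification.
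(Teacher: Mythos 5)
Your first (direct) route is essentially the paper's proof: the paper imposes the Jacobi identity on the symmetric ansatz with $a=\pm 1/(2u_1^4)$, records the resulting explicit formulas for the partial derivatives of $b$ and $c$ with respect to the $u_j$ (following Cooke's method), adds one compatibility condition obtained by equating mixed partials of $c$, and integrates the resulting system to obtain $b$ and $c$ with the two free functions $\alpha(u)$ and $\beta(u)$. Note, however, that your ``more economical'' second route assumes more of Cooke's paper than is actually available or used here: the paper quotes Cooke's finished normal form only for leading coefficient $1$, and for the leading coefficient $\pm 1/u_1^4$ it carries out the Jacobi-identity computation itself rather than specializing a pre-packaged parametric classification.
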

\begin{pz}
It can be shown that there is no special contact transformation which preserves the leading coefficient and simultaneously eliminates one of the unknown functions $\alpha$, $\beta$.
\end{pz}
\begin{proof}
We will prove the lemma in question only for the case of the leading coefficient $1/u_1^4$. The proof for the case of the leading
coefficient equal to $-1/u_1^4$ can be obtained in a similar fashion.
Put $a=1/(2u_1^4)$. Then the Jacobi identity implies the following  relations  
(cf.\ \cite{cooke}):
{
\allowdisplaybreaks
\begin{eqnarray}
\frac{\partial c}{\partial u_6}&=&0\label{c6}\\
\frac{\partial b}{\partial u_4}&=&0\label{b4}\\
\frac{\partial c}{\partial u_5}&=&-\frac{3}{u_1^5}\label{c5}\\
\frac{\partial b}{\partial u_3}&=&\frac{5}{u_1^5}\label{b3}\\
\frac{\partial c}{\partial u_4}&=&\frac{1}{3u_1^6}\left(85u_2-2\frac{\partial b}{\partial u_2}u_1^6\right)\label{c4}\\
\frac{\partial c}{\partial u_3}&=&\frac{1}{3u_1^7}\left(-16\frac{\partial b}{\partial u_2}u_1^6u_2-225u_1u_3-9D_x\left(\frac{\partial b}{\partial u_2}\right)u_1^7+410u_2^2+6bu_1^6\right)\label{c3}\\
\frac{\partial b}{\partial u_1}&=&\frac{1}{3u_1^7}\left(26\frac{\partial b}{\partial u_2}u_1^6u_2+340u_1u_3+7D_x\left(\frac{\partial b}{\partial u_2}\right)u_1^7-550u_2^2-6bu_1^6\right)\label{b1}\\
\frac{\partial c}{\partial u_2}&=&\frac{1}{6u_1^8}\left(-3\frac{\partial b}{\partial u}u_1^8+140\frac{\partial b}{\partial u_2}u_1^6u_2^2+80bu_1^6u_2+11390u_1u_2u_3-96\frac{\partial b}{\partial u_2}u_1^7u_3-14260u_2^3\right.\nonumber\\
&&\left.-27D_x^2\left(\frac{\partial b}{\partial u_2}\right)u_1^8+21D_x(b)u_1^7-1200u_1^2u_4+2\frac{\partial b}{\partial u_2}bu_1^{12}-82D_x\left(\frac{\partial b}{\partial u_2}\right)u_1^7u_2\right)\label{c2}\\
\frac{\partial c}{\partial u_1}&=&\frac{1}{6u_1^9}\left(18\frac{\partial b}{\partial u}u_1^8u_2-42D_x\left(\frac{\partial b}{\partial u_2}\right)u_1^7u_2^2+416\frac{\partial b}{\partial u_2}u_1^6u_2^3-271D_x(b)u_1^7u_2-856bu_1^6u_2^2\right.\nonumber\\
&&\left.+80bu_1^7u_3+14730u_1^2u_2u_4-214D_x\left(\frac{\partial b}{\partial u_2}\right)u_1^8u_3-68\frac{\partial b}{\partial u_2}u_1^8u_4-136D_x^2\left(\frac{\partial b}{\partial u_2}\right)u_1^8u_2\right.\nonumber\\
&&\left.+2D_x\left(\frac{\partial b}{\partial u_2}\right)bu_1^{13}-4\frac{\partial b}{\partial u_2}D_x(b)u_1^{13}-92450u_1u_2^2u_3-1080u_1^{3}u_5-21D_x^3\left(\frac{\partial b}{\partial u_2}\right)u_1^9\right.\nonumber\\
&&\left.+3D_x^2(b)u_1^8+7610u_1^2u_3^2-3D_x\left(\frac{\partial b}{\partial u}\right)u_1^9-404\frac{\partial b}{\partial u_2}u_1^7u_2u_3-4\frac{\partial b}{\partial u_2}bu_1^{12}u_2+87920u_2^4\right)\label{c1}\\
\frac{\partial c}{\partial u}&=&\frac{1}{6u_1^{10}}\left(-21D_x^4\left(\frac{\partial b}
{\partial u_2}\right)u_1^{10}+9D_x^2\left(\frac{\partial b}{\partial u}\right)u_1^{10}
+28410u_1^3u_3u_4+15730u_1^3u_2u_5-66D_x^2(b)u_1^8u_2\right.\nonumber\\
&&\left.-608D_x(b)u_1^7u_2^2-717 D_x(b)u_1^8u_3-13280bu_1^6u_2^3-660bu_1^8u_4-205510u_1^2u_2u_3^2
-139340u_1^2u_2^2u_4\right.\nonumber\\
&&\left.+838900u_1u_2^3u_3-80D_x(b)\frac{\partial b}{\partial u_2}u_1^{13}u_2-1416
\frac{\partial b}{\partial u_2}u_1^7u_2^2u_3-2062D_x\left(\frac{\partial b}{\partial u_2}\right)
u_1^8u_2u_3\right.\nonumber\\
&&\left.-464\frac{\partial b}{\partial u_2}bu_1^{12}u_2^2-
32\frac{\partial b}{\partial u_2}bu_1^{13}u_3-120D_x
\left(\frac{\partial b}{\partial u_2}\right)bu_1^{13}u_2
-744\frac{\partial b}{\partial u_2}u_1^8u_2u_4-673120u_2^5\right.\nonumber\\
&&\left.+440cu_1^8u_2-6D_x(b)bu_1^{13}-232b^{2}u_1^{12}u_2-1092u_1^4u_6
-9D_x^3(b)u_1^9+6D_x(c)u_1^9\right.\nonumber\\
&&\left.-252D_x^3\left(\frac{\partial b}{\partial u_2}\right)
u_1^9u_2-956D_x^2\left(\frac{\partial b}{\partial u_2}\right)u_1^8u_2^2-408D_x^2\left(
\frac{\partial b}{\partial u_2}\right)u_1^9u_3+8c\frac{\partial b}{\partial u_2}u_1^{14}+6\frac{\partial b}{\partial u}b
u_1^{14}\right.\nonumber\\
&&\left.-4b^2\frac{\partial b}{\partial u_2}u_1^{18}
-1304D_x\left(\frac{\partial b}{\partial u_2}\right)u_1^7u_2^3
-282D_x\left(\frac{\partial b}{\partial u_2}\right)u_1^9u_4-2200\frac{\partial b}{\partial u_2}
u_1^6u_2^4-68\frac{\partial b}{\partial u_2}u_1^9u_5\right.\nonumber\\
&&\left.+66D_x\left(\frac{\partial b}{\partial u}\right)u_1^9u_2+192\frac{\partial b}{\partial u}u_1^8u_2^2
+12\frac{\partial b}{\partial u}u_1^9u_3
-12D_x^2\left(\frac{\partial b}{\partial u_2}\right)bu_1^{14}
-12D_x\left(\frac{\partial b}{\partial u_2}\right)D_x(b)u_1^{14}\right.\nonumber\\
&&\left.-708\frac{\partial b}{\partial u_2}
u_1^8u_3^2+3124bu_1^7u_2u_3
\right).\label{c0}
\end{eqnarray}
}

Now we can equate mixed partial derivatives of the coefficients $b$ and $c$ to obtain new relations. Equating
$\frac{\partial}{\partial u}\left(\frac{\partial c}{\partial u_4}\right)$ and
$\frac{\partial}{\partial u_4}\left(\frac{\partial c}{\partial u}\right)$,
evaluating the total derivatives and substituting for the partial derivatives of $c$ with respect
to $u_6,u_5,u_4$, $u_3,u_2$ and for the partial derivative of $b$ with respect to $u_4$ and $u_3$
from the relations (\ref{c6})--(\ref{c3}) and (\ref{c2}) gives\looseness=-1
\begin{eqnarray}\label{mixed}
0&=&1197u_1^8\frac{\partial^2b}{\partial u_2\partial u}+252u_1^9
\frac{\partial^3b}{\partial u_2\partial u_1\partial u}+2034bu_1^6
+378u_1^{10}\frac{\partial^4b}{\partial u_2^2\partial u^2}-71865 u_3u_1
+27u_1^7\frac{\partial b}{\partial u_1}\nonumber\\
&&+16u_1^{12}\left(\frac{\partial b}{\partial u_2}\right)^2
+756u_1^8u_2u_3\frac{\partial^4b}{\partial u_2^3\partial u_1}+392590u_2^2+2268u_1^7u_2u_3
\frac{\partial^3b}{\partial u_2^3}+378u_1^8u_2^2\frac{\partial^4b}{\partial u_2^2\partial u_1^2}\nonumber\\
&&+36bu_1^{12}\frac{\partial^2b}{\partial u_2^2}+2868u_1^6u_2^2\frac{\partial^2b}{\partial u_2^2}
+756u_1^9u_3\frac{\partial^4b}{\partial u_2^3\partial u}
+3926u_1^6u_2\frac{\partial b}{\partial u_2}
+756u_1^9u_2\frac{\partial^4b}{\partial u_2^2\partial u_1\partial u}\nonumber\\
&&+378u_1^8u_3^2\frac{\partial^4b}{\partial u_2^4}
+630u_1^8u_3\frac{\partial^3b}{\partial u_2^2\partial u_1}
+252u_1^8u_2\frac{\partial^3b}{\partial u_2\partial u_1^2}
+1929u_1^7u_2\frac{\partial^2b}{\partial u_2\partial u_1}
+2268u_1^7u_2^2\frac{\partial^3b}{\partial u_2^2\partial u_1}\nonumber\\
&&+2646u_1^8u_2\frac{\partial^3b}{\partial u_2^2\partial u}
+2466u_1^7u_3\frac{\partial^2b}{\partial u_2^2}
+378u_1^8u_4\frac{\partial^3b}{\partial u_2^3}.
\end{eqnarray}
It can be shown that the remaining compatibility conditions for the mixed derivatives of $c$ follow from (\ref{b4}), (\ref{b3}), (\ref{b1}) and (\ref{mixed}).
Solving the system  of partial differential equations (\ref{b4}), (\ref{b3}), (\ref{b1}) and (\ref{mixed}) for
the unknown function $b(u,u_1,u_2,u_3,u_4)$ we arrive at the formula
$$b=\frac{1}{2u_1^6}\left(10u_1u_3-55u_2^2+2\alpha(u)u_1^4\right),$$
where $\alpha(u)$ is an arbitrary function. Substituting the above expression for $b$ into the conditions (\ref{c6}), (\ref{c5}), (\ref{c4}), (\ref{c3}), (\ref{c2}) and (\ref{c1}) and solving the
resulting system of partial differential equations for the unknown function $c(u,u_1,u_2,u_3,u_4,u_5,u_6)$ we obtain
\begin{eqnarray*}c&=&\frac{1}{u_1^8}\left(3u_1^6u_2\frac{\partial \alpha(u)}{\partial u}+2u_1^5u_3\alpha(u)
-6u_1^4u_2^2\alpha(u)+\beta(u)u_1^8- 3u_1^3u_5+ 65u_1^2u_2u_4+ 50u_1^2u_3^2\right.\\
&&\left.- 615u_1u_2^2u_3+ 735u_2^4\right),\end{eqnarray*}
where $\beta(u)$ is another arbitrary function.
\end{proof}

Now let us turn to the property of having momentum.
Notice that the Fr\'{e}chet derivative of the variational  derivative of an arbitrary functional is a self-adjoint  differential operator, see e.g.\ \cite{olver}. The following proposition states that every differential function $h$ whose Fr\'{e}chet derivative is a self-adjoint operator and which satisfies the condition $\mathfrak{D}(h)=u_1$, where $\mathfrak{D}$ is a fifth-order Hamiltonian operator with the leading coefficient of differential order less than or equal to 1, is of the form $h=h(x,u)$. It can be easily verified that any differential function of this form is the variational derivative of the functional $\mathcal{P}=\int\int h(x,u)\mathrm{d}u\mathrm{d}x$. Thus, instead of looking for a functional $\mathcal{P}$ such that
$\mathfrak{D}\delta_u\mathcal{P}=u_1$ it suffices to check the existence of a differential function $h(x,u)$ that satisfies the condition $\mathfrak{D}(h)=u_1.$
\begin{prp}
Let $\mathfrak{D}$ be a fifth-order Hamiltonian operator whose leading coefficient is of differential order less than or equal to 1,
$$\mathfrak{D}=aD_x^5+D_x^5\circ a+bD_x^3+D_x^3\circ b+cD_x+D_x\circ c,\ \mathrm{ord}(a)\leq 1.$$
If there is a differential function $h[u]$ such that $\mathfrak{D}(h)=u_x$ and
$\mathrm{D}_h=(\mathrm{D}_h)^*$, then $h=h(x,u)$.
\end{prp}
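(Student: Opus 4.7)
Let $m := \mathrm{ord}(h)$. The plan is to show $m \le 0$ by combining a parity constraint coming from $D_h = D_h^*$ with a top-order analysis of the identity $\mathfrak{D}(h) = u_x$.

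First I would note that, since $D_h = \sum_{i=0}^{m}(\partial h/\partial u_i)D_x^i$ has leading symbol $(\partial h/\partial u_m)D_x^m$ while $D_h^*$ has leading symbol $(-1)^m(\partial h/\partial u_m)D_x^m$, the assumption $D_h = D_h^*$ forces $m$ to be even, unless $m \le 0$. A quick separate check rules out $m = 1$: matching the $D_x$-coefficients of $D_h$ and $D_h^*$ directly yields $\partial h/\partial u_1 = 0$, contradicting $\mathrm{ord}(h) = 1$. So for the remainder I may assume, aiming at a contradiction, that $m \ge 2$ and $m$ is even.

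Next I would expand the symmetric presentation of $\mathfrak{D}$ as an ordinary differential operator, $\mathfrak{D} = \sum_{k=0}^{5} p_k D_x^k$, with $p_5 = 2a$, $p_4 = 5 D_x(a)$, $p_3 = 10 D_x^2(a) + 2b$, $p_2 = 10 D_x^3(a)+3D_x(b)$, $p_1 = 5D_x^4(a)+3D_x^2(b)+2c$, and $p_0 = D_x^5(a)+D_x^3(b)+D_x(c)$. The key auxiliary fact I would invoke is that $\mathrm{ord}(a)\le 1$ together with the Jacobi identity~\eqref{dorfman} forces $\mathrm{ord}(b) \le 3$ and $\mathrm{ord}(c) \le 5$; both bounds come from the top-jet-order part of~\eqref{dorfman}, exactly in the spirit of the relations~\eqref{c6} and~\eqref{b4} in the proof of Lemma~\ref{leadcoeff2} (those two derivations never use the concrete form $a = \pm 1/(2u_1^4)$ beyond $\mathrm{ord}(a)\le 1$). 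Combined with the trivial bound $\mathrm{ord}(D_x^i a)\le 1+i$, these yield the uniform level estimate $\mathrm{ord}(p_k) \le 6 - k$ for every $k \in \{0,\dots,5\}$.

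The contradiction then follows by inspecting the $u_{m+5}$-coefficient of $\mathfrak{D}(h)=\sum_k p_k D_x^k(h)$. The jet variable $u_{m+5}$ can appear in $D_x^k(h)$ only when $k = 5$ (there contributing $(\partial h/\partial u_m)u_{m+5}$), and it can appear in $p_k$ only if $\mathrm{ord}(p_k)\ge m+5$, i.e.\ if $6-k\ge m+5$, which is impossible for $m\ge 2$. Hence the coefficient of $u_{m+5}$ in $\mathfrak{D}(h)$ equals $2a(\partial h/\partial u_m)$; but $\mathfrak{D}(h)=u_x$ is independent of $u_{m+5}$, so $a(\partial h/\partial u_m)=0$, and since $a\ne 0$ we obtain $\partial h/\partial u_m = 0$, contradicting $\mathrm{ord}(h)=m$. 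Together with the first step this gives $m\le 0$, i.e.\ $h=h(x,u)$. The main obstacle in writing the proof up rigorously is justifying the generality of the order bounds on $b$ and $c$; everything else is a one-line leading-coefficient check.
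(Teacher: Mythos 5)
Your proof is correct and follows essentially the same route as the paper's: the order bounds $\mathrm{ord}(b)\le 3$, $\mathrm{ord}(c)\le 5$ (which the paper, like you, takes from the Jacobi-identity relations of Cooke), then the observation that the coefficient of $u_{m+5}$ in $\mathfrak{D}(h)=u_x$ is $2a\,\partial h/\partial u_m$, forcing $\mathrm{ord}(h)\le 1$, and finally $D_h=D_h^*$ to kill $\partial h/\partial u_1$. The only differences are organizational (you invoke the self-adjointness parity argument up front, and your separate check of $m=1$ is redundant given that argument), so the two proofs are in substance identical.
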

\begin{proof}
As it was already mentioned above, the highest possible value $m$ of the level of a
fifth-order Hamiltonian operator is $m=7$,
so we have  $\mathrm{ord}(a)\leq 1$, $\mathrm{ord}(b)\leq 4$ and $\mathrm{ord}(c)\leq 6$.  

Using the relations for the coefficients $a,b,c$ and their derivatives from \cite{cooke} 
we see that $\frac{\partial c}{\partial u_6 }=0$ and $\frac{\partial b}{\partial u_4}=0$, 
so $\mathrm{ord}(c)\leq 5$ and $\mathrm{ord}(b)\leq 3$.
Suppose that $\mathrm{ord}(h)=K\geq 2$.
Then $\mathrm{ord}(D_x^5(h))=5+K$ and differentiating the identity $\mathfrak{D}(h)=u_x$ with respect to $u_{5+K}$ we obtain
$$0=\frac{\partial}{\partial u_{5+K}}(D_x^5(h))=\frac{\partial h}{\partial u_K}.$$
Therefore, $\mathrm{ord}(h)\leq1$. Using the condition $\mathrm{D}_h={(\mathrm{D}_h)}^*$ we get
$\frac{\partial h}{\partial u_1}=0$ as claimed.
\end{proof}
\begin{prp}\label{prp1}
No fifth-order Hamiltonian operator with the leading coefficient $\pm 1$ has momentum.
\end{prp}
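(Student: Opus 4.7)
The strategy is to combine the preceding proposition---which reduces the existence of a momentum to the existence of a differential function $h=h(x,u)$ satisfying $\mathfrak{D}(h)=u_1$---with Cooke's classification lemma recalled just above. Translation invariance (necessary for any operator having momentum) narrows Cooke's two branches to: (i) $b$ and $c$ are constants, or (ii) $b$ and $c$ are given by the structured Cooke formulas with $\alpha,\beta,\gamma$ constant. By the observation based on the transformation $x=y,\ u=iv$, it suffices to treat the leading coefficient $+1$.

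In branch (i), $\mathfrak{D}=D_x^5+2bD_x^3+2cD_x$ with constant $b,c$. Among the summands of $\mathfrak{D}(h)$ only $D_x^5(h)$ can reach differential order $5$, and its coefficient of $u_5$ equals $h_u$; since the right-hand side $u_1$ has no $u_5$-part, we obtain $h_u=0$, hence $h=h(x)$. But then $\mathfrak{D}(h)=h^{(5)}(x)+2bh^{(3)}(x)+2ch^{(1)}(x)$ is quasiconstant, contradicting $\mathfrak{D}(h)=u_1$.

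In branch (ii), the coefficient of $u_5$ in $\mathfrak{D}(h)$ receives contributions from $D_x^5(h)$ (giving $h_u$), from $D_x^3\circ b$ (giving $b_{u_2}h=\tfrac{3}{2}(u+\alpha)^{-1}h$), and from $D_x\circ c$ (giving $c_{u_4}h=-(u+\alpha)^{-1}h$); summing and setting to zero yields $h_u+\tfrac{h}{2(u+\alpha)}=0$, so $h(x,u)=f(x)(u+\alpha)^{-1/2}$ for some function $f$. I would then substitute this ansatz into $\mathfrak{D}(h)=u_1$ and write the left-hand side as $\sum_{j=0}^{5}f^{(j)}(x)\,\Psi_j(u,u_1,\dots,u_4)$, where the $\Psi_j$ are definite differential functions of $u$ alone. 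Since both sides are $x$-independent apart from the factors $f^{(j)}(x)$, comparing coefficients of a few carefully selected jet monomials---say those of $u_4$, $u_2^2$, $u_1u_3$, and finally $u_1$---yields an overdetermined linear system in $f,f',\dots,f^{(5)}$ which forces $f\equiv 0$, contradicting $\mathfrak{D}(h)=u_1$.

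The main obstacle is the bookkeeping in branch (ii): the rational structure of $b$ and $c$ produces a lengthy expansion of $\mathfrak{D}(f(x)(u+\alpha)^{-1/2})$, and there is a further sub-split according to whether $\beta=0$ or $\beta\ne 0$ (with the associated constraint on $\gamma$). The saving grace is that after the $u_5$-analysis $h$ is fixed up to a single function $f(x)$, so the resulting system is linear in $f$ and its derivatives; the task is to identify a minimal set of coefficient equations whose inconsistency is manifest, rather than to write out the full expansion of $\mathfrak{D}(h)$.
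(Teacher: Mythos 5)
Your proposal follows essentially the same route as the paper: reduce to leading coefficient $+1$ via $u=iv$, use the preceding proposition to restrict to $h=h(x,u)$, extract $h=f(x)(u+\alpha)^{-1/2}$ from the coefficient of $u_5$ in $\mathfrak{D}(h)=u_1$, and then derive a contradiction from lower-order coefficients (the paper uses the $u_3$-coefficient to get $f''=-f\gamma$ and the $u_1$-coefficient to reach $0=1$, i.e.\ the inhomogeneous system is inconsistent rather than forcing $f\equiv 0$, but the outcome is the same). Your explicit treatment of the quasiconstant branch is a small addition the paper leaves implicit; the only thing left unexecuted is the final coefficient computation, which the paper likewise only reports the result of.
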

\begin{proof}
We have already noticed that it is sufficient to show that no fifth-order Hamiltonian operator
with the leading coefficient $1$ has momentum because the operator with the leading coefficient $-1$
can be transformed into the case under study by a special contact transformation which preserves the property of
 having (or not having) momentum.  Now consider an operator with the
leading coefficient $1$ and suppose that this operator has momentum. Then, differentiating the
condition $\mathfrak{D}(h)=u_1$ with respect to $u_5$, we get
$$\frac{\partial h}{\partial u}=-\frac{h}{2(u+\alpha)},$$ which implies that $h(x,u)=f(x)/\sqrt{u+\alpha(x)}$.
Next, differentiating the condition $\mathfrak{D}(h)=u_1$ with respect to $u_3$ we arrive at
 $\frac{\partial^2 f }{\partial x^2}=-f\gamma$. Finally, differentiating $\mathfrak{D}(h)=u_1$ with
respect to $u_1$ and substituting for $h$ and $\frac{\partial^2 f}{\partial x^2}$, we arrive at
 $0=1$, which is a contradiction. Thus, no fifth-order Hamiltonian operator with the leading coefficient $\pm 1$ has momentum.
\end{proof}
\begin{prp}\label{prp2}
Any fifth-order translation-invariant Hamiltonian operator with the leading coefficient $\pm 1/u_1^4$ has momentum,
and the corresponding functional $\mathcal{P}$ is of the form $\mathcal{P} =\int\int h(u)\mathrm{d}u\mathrm{d}x$,
where $h(u)$ is a solution of the equation
\begin{equation}\label{eqh}\pm\frac{\partial^5 h}{\partial u^5}+ 2\alpha(u)\frac{\partial ^3 h}{\partial u^3}
+3\frac{\partial\alpha(u)}{\partial u}\frac{\partial^2 h}{\partial u^2}
+3\frac{\partial^2\alpha(u)}{\partial u^2}\frac{\partial h}{\partial u}
+2\beta(u)\frac{\partial h}{\partial u}+\frac{\partial^3\alpha(u)}{\partial u^3}h
+\frac{\partial \beta(u)}{\partial u}h-1=0,
\end{equation} 
where $\alpha(u)$
and $\beta(u)$ are as in Lemma \ref{leadcoeff2}.

\end{prp}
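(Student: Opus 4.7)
The plan is to look for a momentum density of the specific form $h=h(u)$. By the preceding Proposition, any momentum density for a fifth-order Hamiltonian operator $\mathfrak{D}$ with $\mathrm{ord}(a)\leq 1$ must be of the form $h(x,u)$; and the discussion just above that Proposition guarantees that every such $h$ arises as the variational derivative of the functional $\mathcal{P}=\int\int h(x,u)\,du\,dx$. Since both the operator $\mathfrak{D}$ described in Lemma \ref{leadcoeff2} and the target $u_1$ are translation-invariant, the natural simplification is the ansatz $h=h(u)$, whereupon the task reduces to exhibiting an $h(u)$ satisfying $\mathfrak{D}(h)=u_1$.

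The direct approach is to expand $D_x^k(h(u))$ for $k=1,\ldots,5$ by the chain rule (each is a polynomial in $u_1,\ldots,u_k$ whose coefficients are $h'(u),\ldots,h^{(k)}(u)$) and to assemble
\[
\mathfrak{D}(h)=\pm\frac{1}{2u_1^4}D_x^5(h)\pm D_x^5\!\left(\frac{h}{2u_1^4}\right)+b\,D_x^3(h)+D_x^3(bh)+c\,D_x(h)+D_x(ch),
\]
substituting the explicit formulas of Lemma \ref{leadcoeff2} for $b$ and $c$ and expanding each $D_x^j$ via the Leibniz rule. The main obstacle is the sheer volume of intermediate terms: a priori the result involves all monomials in $u_2,\ldots,u_5$ with coefficients built from $h^{(k)}(u)$, $\alpha^{(j)}(u)$ and $\beta^{(j)}(u)$. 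The crux of the argument is that every monomial containing some $u_i$ with $i\geq 2$ cancels identically, leaving only a multiple of $u_1$; these cancellations are forced because the very formulas for $b$ and $c$ in Lemma \ref{leadcoeff2} were derived from the Jacobi-identity relations (\ref{c6})--(\ref{c0}).

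The practical bookkeeping is to group the surviving $u_1$-terms by the derivative order of the accompanying $h$-factor. One expects precisely: the coefficient $\pm 1$ of $u_1 h^{(5)}$ arising solely from the leading $a\,D_x^5$ and $D_x^5\circ a$; a vanishing coefficient of $u_1 h^{(4)}$; coefficients $2\alpha$, $3\alpha'$, and $3\alpha''+2\beta$ of $u_1 h'''$, $u_1 h''$, and $u_1 h'$ respectively, collected from the contributions involving $b$; and the coefficient $\alpha'''+\beta'$ of $u_1 h$ collected from the contributions involving $c$. Once the identity $\mathfrak{D}(h(u))=u_1\cdot M[h]$ is established with $M[h]=\pm h^{(5)}+2\alpha h'''+3\alpha' h''+(3\alpha''+2\beta)h'+(\alpha'''+\beta')h$, the equation $\mathfrak{D}(h)=u_1$ becomes $M[h]=1$, which is exactly (\ref{eqh}); since (\ref{eqh}) is an inhomogeneous linear ODE of the fifth order in a single variable with smooth coefficients, local solutions $h(u)$ always exist by standard ODE theory, so the functional $\mathcal{P}=\int\int h(u)\,du\,dx$ provides the desired momentum.
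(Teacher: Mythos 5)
Your proof follows essentially the same route as the paper: reduce to a density $h(x,u)$ via the preceding proposition, restrict to $h=h(u)$, and observe that substituting into $\mathfrak{D}(h)=u_1$ collapses (after the asserted cancellation of all monomials involving $u_i$, $i\geq 2$) to the single ODE (\ref{eqh}), whose local solvability yields the momentum. The only minor difference is that the paper obtains $\partial h/\partial x=0$ by differentiating the condition $\mathfrak{D}(h)=u_1$ with respect to $u_5$ rather than positing $h=h(u)$ as an ansatz, which additionally shows that every momentum density must have the stated form rather than merely exhibiting one.
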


\begin{proof}
We will prove our claim only for the case of the leading coefficient  $1/u_1^4$. The proof for the case of the leading coefficient  equal to $-1/u_1^4$ can be obtained in a very similar fashion.

Suppose we are given a fifth-order Hamiltonian operator with the leading coefficient $1/u_1^4$ which has momentum.
Differentiating the condition $\mathfrak{D}(h)=u_1$ with respect
to $u_5$ we obtain $\frac{\partial h}{\partial x}=0$. Substituting this into the condition
$\mathfrak{D}(h)=u_1$ yields (\ref{eqh}).
%
\end{proof}
Combining Propositions \ref{prp1} and \ref{prp2} with the fact that special contact transformations preserve existence of momentum, we arrive at our main result.
\begin{vt}
\begin{enumerate}
\item{A fifth-order Hamiltonian which is not translation-invariant cannot have momentum.}
\item{A fifth-order translation-invariant Hamiltonian operator that can be transformed using a special contact transformation into an operator with the leading coefficient $\pm 1$ cannot have momentum.}
\item{Any fifth-order translation-invariant Hamiltonian operator that can be transformed using a special contact transformation into an operator with the leading coefficient $\pm 1/u_1^4$ has momentum.}
\end{enumerate}
\end{vt}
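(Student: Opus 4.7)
The proof will essentially be an assembly of results already proved in the paper, together with the key invariance property of special contact transformations. The plan is to handle each of the three items in turn, invoking the relevant earlier statement.

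For item (1), I would rely on the observation made in the Preliminaries section: a Hamiltonian operator $\mathfrak{D}$ has momentum only if the Lie derivative of $\mathfrak{D}$ along the vector field with characteristic $u_1$ vanishes, which is equivalent to $\mathfrak{D}$ being translation-invariant (its coefficients do not depend on $x$ explicitly). Hence if $\mathfrak{D}$ is not translation-invariant, it cannot possess a functional $\mathcal{T}$ with $\mathfrak{D}\delta_u\mathcal{T}=u_1$. This takes one or two lines.

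For item (2), the plan is to argue by contradiction using the invariance of the momentum property under special contact transformations, established in \cite{mokhov1985,mokhov}. Assume $\mathfrak{D}$ is translation-invariant and can be brought by a special contact transformation to an operator $\mathfrak{D}'$ with leading coefficient $\pm 1$, and that $\mathfrak{D}$ has momentum. Since special contact transformations preserve both the property of having momentum and (up to the $-1$ case handled at the start of Section~4 via the transformation $x=y$, $u=iv$) reduce to the leading coefficient $+1$, the operator $\mathfrak{D}'$ would also have momentum. But this contradicts Proposition~\ref{prp1}. Item (3) is the analogous positive statement: by Proposition~\ref{prp2} the reduced operator with leading coefficient $\pm 1/u_1^4$ admits the momentum $\mathcal{P}=\int\int h(u)\,\mathrm{d}u\,\mathrm{d}x$ where $h(u)$ solves equation~(\ref{eqh}); applying the inverse special contact transformation transports this momentum back to a momentum functional for $\mathfrak{D}$.

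There is no real obstacle here, since all the heavy machinery (the classification of leading coefficients, the structural Lemma~\ref{leadcoeff2}, the obstruction in Proposition~\ref{prp1}, the existence result in Proposition~\ref{prp2}, and the Mokhov invariance lemma) is already in place. The only point worth making explicit is that in item (2) one must note that the special contact transformation $x=y$, $u=iv$ used to pass between leading coefficients $+1$ and $-1$ is indeed a special contact transformation in the sense of \cite{mokhov1985,mokhov}, so that the non-existence of momentum for leading coefficient $+1$ established in Proposition~\ref{prp1} really transfers to leading coefficient $-1$; this was already spelled out in the remark preceding Proposition~\ref{prp1}. Accordingly, the proof will be essentially a three-sentence assembly.
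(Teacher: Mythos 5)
Your proposal is correct and follows exactly the paper's own route: the paper proves the theorem in one line by combining Propositions \ref{prp1} and \ref{prp2} with the invariance of the momentum property under special contact transformations, with item (1) following from the translation-invariance necessity already established. The one detail you flag explicitly --- that the passage between leading coefficients $+1$ and $-1$ via $x=y$, $u=iv$ is itself a special contact transformation --- is likewise handled in the paper in the paragraph preceding Proposition \ref{prp1}, so nothing is missing.
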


\section{Examples}
\begin{pr}
 The operator
$$\mathfrak{D}=\frac{1}{2u_1^4 }D_x^5+ D_x^5\circ\frac{1}{2u_1^4 }+bD_x^3+D_x^3\circ b+
cD_x+D_x\circ c,$$
where
\begin{eqnarray*}
b&=&\frac{1}{2u_1^6}\left( 10u_3u_1- 55u_2^2+ u_1^4\right),\\
c&=&\frac{1}{u_1^8}\left(u_1^5u_3
-3u_1^4u_2^2-u_1^8- 3u_1^3u_5+ 65u_1^2u_2u_4+ 50u_1^2u_3^2- 615u_1u_2^2u_3+ 735u_2^4\right),
\end{eqnarray*}
is of the form from Lemma \ref{leadcoeff2} ($\alpha=(1/2)$, $\beta=-1$). The function $h(u)=(-1/2)u$ is a solution of the ordinary differential equation
$$\frac{\partial^5 h}{\partial u^5}+ \frac{\partial ^3 h}{\partial u^3}
-2\frac{\partial h}{\partial u}-1=0. $$
The functional $\mathcal{P}=-(1/4) \int  u^2 \/ \mathrm{d}x$ satisfies the condition $\mathfrak{D}\delta_u\mathcal{P}=u_1$.
\end{pr}
\begin{pr}
 The operator
$$\mathfrak{D}=\frac{1}{2u_1^4 }D_x^5+ D_x^5\circ\frac{1}{2u_1^4 }+bD_x^3+D_x^3\circ b+
cD_x+D_x\circ c,$$
where
\begin{eqnarray*}
b&=&\frac{1}{2u_1^6}\left( 10u_3u_1- 55u_2^2+2\sin(u) u_1^4\right),\\
c&=&\frac{1}{u_1^8}\left(3u_1^6u_2 \cos(u)+2u_1^5u_3\sin(u)
-6u_1^4u_2^2\sin(u)+(\sin(u)+u) u_1^8- 3u_1^3u_5\right.\\
&&\left.+ 65u_1^2u_2u_4+ 50u_1^2u_3^2- 615u_1u_2^2u_3+ 735u_2^4\right),
\end{eqnarray*}
is of the form from Lemma \ref{leadcoeff2} ($\alpha=\sin(u)$, $\beta=\sin(u)+u$). The function $h(u)=1$ is a solution of the ordinary differential equation
$$\frac{\partial^5 h}{\partial u^5}+2\sin(u)\frac{\partial^3 h}{\partial u^3} +3\cos(u)\frac{\partial ^2 h}{\partial u^2}-\sin(u)\frac{\partial  h}{\partial u}
+2\frac{\partial h}{\partial u}+h-1=0,$$
and hence the functional $\mathcal{P}=\int u\ \mathrm{d}x$ satisfies the condition $\mathfrak{D}\delta_u\mathcal{P}=u_1$.
\end{pr}
\section*{Acknowledgements}
The author thanks Dr. A. Sergyeyev for stimulating discussions. This research was supported by the Silesian university in Opava under the student grant SGS/18/2010, by the Ministry of Education, Youth and Sports of the Czech Republic under the grant
MSM 4781305904, and by the fellowship
from the Moravian--Silesian region.\looseness=-1

\end{document}